\newtheorem{theorem}{Theorem}
\newtheorem{lemma}[theorem]{Lemma}
\newtheorem{remark}{Remark}
\newtheorem{definition}{Definition}
\newcommand{\T}{{\scriptscriptstyle\mathsf{T}}}
\renewcommand{\H}{{\scriptscriptstyle\mathsf{H}}}
\newcommand{\mmse}{\mathsf{mmse}}
\newfont{\bbb}{msbm10 scaled 500}
\newfont{\bb}{msbm10 scaled 1100}
\newcommand{\CC}{\mbox{\bb C}}
\newcommand{\RR}{\mbox{\bb R}}
\newcommand{\EE}{\mbox{\bb E}}
\newcommand{\zerov}{{\bf 0}}
\newcommand{\Am}{{\bf A}}
\newcommand{\Bm}{{\bf B}}
\newcommand{\Id}{{\bf I}}
\newcommand{\Jm}{{\bf J}}
\newcommand{\Pm}{{\bf P}}
\newcommand{\Qm}{{\bf Q}}
\newcommand{\Um}{{\bf U}}
\newcommand{\Wm}{{\bf W}}
\newcommand{\Vm}{{\bf V}}
\newcommand{\Xm}{{\bf X}}
\newcommand{\Ym}{{\bf Y}}
\newcommand{\Zm}{{\bf Z}}
\newcommand{\Cc}{{\cal C}}
\newcommand{\Lc}{{\cal L}}
\newcommand{\Nc}{{\cal N}}
\newcommand{\Uc}{{\cal U}}
\newcommand{\Xc}{{\cal X}}
\newcommand{\Yc}{{\cal Y}}
\newcommand{\betav}{\hbox{\boldmath$\beta$}}
\newcommand{\etav}{\hbox{\boldmath$\eta$}}
\newcommand{\thetav}{\hbox{\boldmath$\theta$}}
\newcommand{\rhov}{\hbox{\boldmath$\rho$}}
\newcommand{\Sigmam}{\hbox{\boldmath$\Sigma$}}
\newcommand{\Omegam}{\hbox{\boldmath$\Omega$}}
\newcommand{\eqdef}{\stackrel{\Delta}{=}}
\newcommand{\cov}{{\hbox{cov}}}
\renewcommand{\Am}{\pmb{A}}
\renewcommand{\Bm}{\pmb{B}}
\renewcommand{\Jm}{\pmb{J}}
\renewcommand{\Pm}{\pmb{P}}
\renewcommand{\Qm}{\pmb{Q}}
\renewcommand{\Um}{\pmb{U}}
\renewcommand{\Vm}{\pmb{V}}
\renewcommand{\Wm}{\pmb{W}}
\renewcommand{\Xm}{\pmb{X}}
\renewcommand{\Ym}{\pmb{Y}}
\renewcommand{\Zm}{\pmb{Z}}
\newcommand{\mkv}{-\!\!\!\!\minuso\!\!\!\!-}
\begin{document}

\title{Scalable Vector Gaussian Information Bottleneck}

\author{\IEEEauthorblockN{Mohammad Mahdi Mahvari}
\IEEEauthorblockA{
\thanks{}
Sharif University of Technology \\
Tehran, Iran\\
 {\tt mahdimahvar@gmail.com}
}
\and
\IEEEauthorblockN{Mari Kobayashi}
\IEEEauthorblockA{
\thanks{The work of M. Kobayashi was supported by DFG.}
Technical University of Munich \\
Munich, Germany\\
 {\tt mari.kobayashi@tum.de}
}
\and
\IEEEauthorblockN{Abdellatif Zaidi}
\IEEEauthorblockA{
\thanks{}
Universite Paris-Est \\
Champs-sur-Marne, 77454, France\\
 {\tt abdellatif.zaidi@u-pem.fr}
}
}

\maketitle
\begin{abstract}
In the context of statistical learning, the Information Bottleneck method seeks a right balance between accuracy and generalization capability through a suitable tradeoff between compression complexity, measured by minimum description length, and distortion evaluated under logarithmic loss measure.  In this paper, we study a variation of the problem, called scalable information bottleneck, in which the encoder outputs multiple descriptions of the observation with increasingly richer features. The model, which is of successive-refinement type with degraded side information streams at the decoders, is motivated by some application scenarios that require varying levels of accuracy depending on the allowed (or targeted) level of complexity. We establish an analytic characterization of the optimal relevance-complexity region for vector Gaussian sources. Then, we derive a variational inference type algorithm for general sources with unknown distribution; and show means of parametrizing it using neural networks. Finally, we provide experimental results on the MNIST dataset which illustrate that the proposed method generalizes better to unseen data during the training phase.
\end{abstract}

\section{Introduction}
In statistical (supervised) learning models, one seeks to strike a right balance between accuracy and generalization capability. Many existing machine learning algorithms generally fail to do so; or else only at the expense of large amounts of algorithmic components and hyperparameters that are heavily tuned heuristically for a given task (and, even so, generally fall short of generalizing to other tasks). 
The Information Bottleneck (IB) method~\cite{tishby1999information}, which is essentially a remote source coding problem under logarithmic loss fidelity measure, 
attempts to do so by finding a suitable tradeoff between algorithm's robustness, measured by compression complexity, and 
accuracy or relevance, measured by the allowed average logarithmic loss (see e.g. ~\cite{zaidi2020information, goldfeld2020information}). More specifically, for a target (label) variable $X$ and an observed variable $Y$, the IB finds a description $U$ that is maximally informative about $X$ while being minimally informative about $Y$, where informativeness is measured via Shannon's mutual information. The relevance-complexity tradeoff has been studied in various setups, including multivariate IB \cite{friedman2001multivariate}, distributed IB \cite{ugur2020vector}, IB with decoder side information
\cite{ugur2020vector,tian2009remote}, IB with unknown fading channels \cite{steiner2020broadcast}. Moreover, the IB model has been applied to communication scenarios with oblivious relays \cite{winkelbauer2014rate,caire2018information,aguerri2019capacity,xu2021information}.
Recently, the IB framework has been extended to a practical scenario when the source and observation distribution is unknown and can be estimated empirically \cite{alemi2016deep, kingma2013auto,aguerri2019distributed,zaidi2020information, goldfeld2020information}. 

\begin{figure}[t]
\vspace*{-0.8em}	
\begin{center}	
\includegraphics[width=0.38\textwidth]{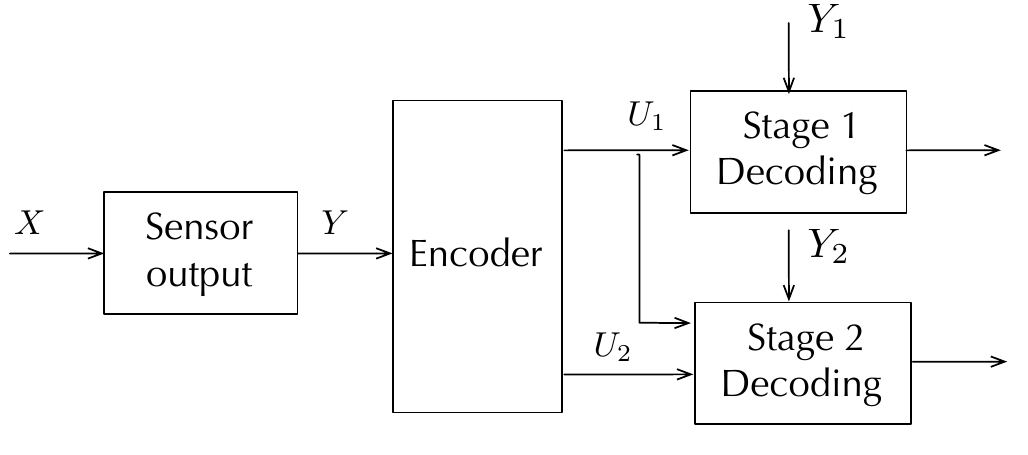}
\caption{Scalable information bottleneck for $T=2$ stages.}
\label{fig:ScalableRSC}
\end{center}
\vspace*{-1.5em}	
\end{figure} 
We study a variation of the IB problem, called {\it scalable information bottleneck}, where the encoder outputs $T\geq 2$ descriptions with increasingly richer features. 
This is motivated primarily by application scenarios in which a varying level of accuracy is required depending on the allowed and/or required level of complexity. The model is illustrated in Fig.~\ref{fig:ScalableRSC} for $T=2$. As an example, one may think about the simple scenario of making inference about a moving object on the road. A decision maker which receives only coarse information about the object would have to merely identify its type (e.g., car, bicycle, bus), while one that receives also refinement information would have to infer more accurate description.


In the $T$-stage scalable information bottleneck, the encoder observes the source $X$ via a sensor output $Y$ and wishes to encode $Y$ into $T$ stages 
of descriptions denoted by $(U_1, \dots, U_T)$,  while the $t$-stage decoder wishes to reconstruct the source from $(U_1, \dots, U_t)$ and its side information $Y_t$. For the case of vector Gaussian sources and channels with degraded side information, we fully characterize the relevance-complexity region. Numerical examples for simple scalar Gaussian sources and channels illustrate the usefulness of decoder side information. For a more practical case where the joint source and channel distribution is unknown, we derive a variational inference type algorithm with a set of training data. The experiments using MNIST dataset demonstrate that our proposed scheme can be efficiently applied to the pattern classification by offering stronger generalization capability than a single-stage case. 

Remark that a similar successive refinement model with degraded side information has been studied in the context of the source coding problem \cite{tian2007multistage,xu2019vector}. Based on the rate-distortion region under general distortion measure \cite[Theorem 1]{tian2007multistage}, a recent work characterized the rate-distortion region of the vector Gaussian sources and channels under the quadratic distortion \cite{xu2019vector}. Similarly,  we adapt \cite[Theorem 1]{tian2007multistage} to the remote source setup and the logarithmic loss distortion measure, relevant to the classification problem. 


\section{Problem Formulation}\label{Problem Formulation}

Let $(X^n, Y^n, Y_1^n, \dots, Y_T^n) \in \Xc^n \times \Yc^n \times \Yc_1^n \times \dots \times \Yc_T^n$ be a sequence of $n$ i.i.d. discrete random variables corresponding to the source, the observation, and side information at 
$T$ stages. In particular, we consider degraded side information satisfying the following Markov chain
\begin{align}
(X, Y) \mkv Y_T \mkv \dots  \mkv Y_1.
\end{align}
\begin{definition}
A $T$-stage successive refinement code of length $n$ consists of $T$ encoding functions
\begin{align}
\phi_t^{(n)}: \Yc^n \mapsto \{1, \dots, M_t^{(n)}\},
\end{align}
and $T$ decoding functions
\begin{align}
\psi_t^{(n)}: \{1, \dots, M_1^{(n)}\} \times \dots \{1, \dots, M_t^{(n)}\} \times \Yc_t \mapsto \hat{\Xc}^n_t,
\end{align}
for any $t\in[T]$, where the reconstruction $\hat{\Xc}^n_t$ is the set of probability distributions over the $n$-Cartesian product of $\RR$. 
\end{definition}
\begin{definition}
A $T$-stage successive refinement code $(\Delta_1, \dots, \Delta_T, R_1, \dots, R_T)$ for the $T$-stage IB problem is achievable if there exists $n$, $T$ encoding functions,
and $T$ decoding functions such that
\begin{align}
\Delta_t & \leq \frac{1}{n} I\left(X^n; \psi_t^{(n)}(\phi_1^{(n)}(Y^n), \dots, \phi_t^{(n)}(Y^n), Y_t^n) \right) \nonumber,\\
R_t & \geq \frac{1}{n} \log M_t^{(t)},\nonumber
\end{align}
for any $t\in[T]$. The relevance-complexity region of the $T$-scalable IB problem is defined as the union of all non-negative tuples $(\Delta_1, \dots, \Delta_T,R_1, \dots, R_T)$ that are 
achievable. 
\end{definition}
Next, we provide the relevance-complexity region of the $T$-scalable information bottleneck problem. 
By a straightforward adaptation of \cite[Theorem 1]{tian2007multistage} to the remote source setup and the logarithmic loss distortion measure considered in this work, we obtain the following result. Namely, 
the relevance-complexity region 
for the $T$-stage successive refinement code satisfies
\begin{subequations}\label{eq:region1}
\begin{align}
\sum_{l=1}^{t}R_l  &\geq  I(Y; U_1, \dots , U_t |Y_t),\;\;\; \forall t=1, \dots, T \label{eq:region1-rate}\\
\Delta_t  &  \leq I(X;U_1, \dots, U_t, Y_t), \;\;\; \forall t=1, \dots, T\label{eq:region1-relevance}
\end{align}
\end{subequations}
for some joint pmf
\begin{align}
&P_{XY} (x,y)P_{Y_T|XY}(y_T|x, y) \nonumber\\
& \cdot \sum_{t=1}^{T-1} P_{Y_t|Y_{t+1}}(y_t| y_{t+1})P_{U_1, \dots, U_T |Y}(u_1,\dots,u_T|y).
\label{eq:joint pdf}
\end{align}
\begin{remark}\label{Changing Relevance Formulation}
Similarly to \cite[Remark 1]{tian2007multistage}, we can provide an alternative region to \eqref{eq:region1} 
under the stronger Markov chain $(X, Y_1, \dots, Y_T) \mkv Y \mkv U_T \mkv \dots \mkv U_1$.
\begin{subequations}\label{eq:region2}
\begin{align}
\sum_{l=1}^{t}R_l  &\geq  I(Y; U_1, \dots , U_t |Y_t), \;\;\; \forall t \label{eq:region2-rate} \\
\Delta_t  &  \leq I(X;U_t, Y_t), \;\;\; \forall t.  \label{eq:region2-relevance} 
\end{align}
\end{subequations}
These two regions are equivalent as the RHS of \eqref{eq:region1-relevance} coincides with that of \eqref{eq:region2-relevance} under the stronger Markov chain. 
The former \eqref{eq:region1} will be used to characterize the relevance-complexity region in Section \ref{sect:Vector Gaussian with SI}, while the latter \eqref{eq:region2} will be used for a more practical variational case when the underlying distribution is unknown in Section \ref{sect:Variational}.
\end{remark}

\section{Scalable Vector Gaussian IB}\label{sect:Vector Gaussian with SI}
\subsection{System Model and Main Result}
We focus on the case when the source, the observation as well as the side informations are jointly Gaussian distributed and its distribution is known. Namely, the source is given by a $m$-dimensional Gaussian random vector denoted by $
\Xm$ of zero mean and covariance $\Sigmam_x$, while the observation $\Ym$, the side information $\Ym_t$ in stage $t$, of dimension $m\times 1$, are given respectively by 
\begin{align}\label{eq:670}
\Ym &= \Xm + \Wm_0, \;\;\;  \Ym_t = \Xm + \Wm_t \\
\Sigmam_{w_0, w_t} = \left[\begin{matrix}
\Wm_0\\ \Wm_t 
\end{matrix}\right]
& \sim\Nc_{\CC} \left( \left[
\begin{matrix} \zerov\\ \zerov \end{matrix}\right], 
\left[
 \begin{matrix}
 \Sigmam_0 & \Sigmam_{0t}\\
 \Sigmam_{t0} & \Sigmam_t
 \end{matrix}\right]
\right)
\label{eq:covariance matrix}
\end{align}
where we assume that $\Sigmam_T \preceq \dots \preceq \Sigmam_1$ holds and that $\Ym_t$, $\Ym_s$, for $t\neq s$ are independent given $\Ym$. 
We also define the covariance of the observation noise $\Wm_0$ given the side information noise $\Wm_t$ as
\begin{align}\label{eq:vect_crosscovarianc}
\Sigmam_{0|t} \eqdef \cov(\Wm_0|\Wm_t) = \Sigmam_0 - \Sigmam_{0t} \Sigmam_t^{-1} \Sigmam_{t0} 
\end{align}
which reduces to $\Sigmam_0$ simply in the absence of side information or in the case of uncorrelated noises. 
\begin{theorem}
\label{Vector_Gaussian_proposition}
For the vector Gaussian model described in \eqref{eq:670}-\eqref{eq:vect_crosscovarianc}, the relevance-complexity region is given by
\begin{subequations}\label{eq:vec_region}
\begin{align}
\Delta_t  & \leq \log \left|\Id \!+\!  \left[ \Sigmam_t^{-1} \!+\! (\Id \!-\! \Sigmam_t^{-1}\Sigmam_{t0})  \Omegam_t (\Id \!-\! \Sigmam_t^{-1}\Sigmam_{t0})  \right] \! \Sigmam_x \right| \label{eq:vec_region_a}\\
\Delta_t  & \leq \sum_{l=1}^{t}R_l + \log |\Id- \Omegam_t \Sigmam_{0|t} | + \log|\Id + \Sigmam_x \Sigmam_t^{-1} |,\label{eq:vec_region_b}
\end{align}
\end{subequations}
for any $t\in [T]$, where $\zerov \preceq \Omegam_{1}  \preceq \dots \preceq \bm{\Omega}_{T} \preceq \Sigmam_{0|T}^{-1}$. 
\end{theorem}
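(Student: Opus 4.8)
\emph{Proof plan.} The plan is to derive \eqref{eq:vec_region} from the single-letter region \eqref{eq:region1}, taken as given and extended to the present continuous alphabets by the standard discretization argument. Two things must be shown: (i) that \eqref{eq:vec_region} is contained in the region obtained by evaluating \eqref{eq:region1} with jointly Gaussian $(U_1,\dots,U_T)$ (achievability); and (ii) that every $(U_1,\dots,U_T)$ obeying the Markov structure of \eqref{eq:joint pdf} yields, through \eqref{eq:region1}, only tuples already in \eqref{eq:vec_region} (converse). All differential entropies below are for circularly-symmetric complex Gaussians, so $h(\cdot)=\log\!\big((\pi e)^m\,|\cdot|\big)$ and mutual informations are log-determinant ratios (no factor $1/2$).

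\textbf{Achievability.} Fix admissible $\zerov\preceq\Omegam_1\preceq\cdots\preceq\Omegam_T\preceq\Sigmam_{0|T}^{-1}$. I would take $(U_1,\dots,U_T)$ jointly Gaussian given $\Ym$, e.g.\ $U_t=\Am_t\Ym+\Qm_t$ with $(\Qm_1,\dots,\Qm_T)$ Gaussian and independent of $(\Xm,\Ym,\Ym_1,\dots,\Ym_T)$, such that the chain $\Ym\mkv U_T\mkv\cdots\mkv U_1$ holds and, writing $S_t\defeq(U_1,\dots,U_t)$, the source-conditioned error satisfies $\mathbb{E}\!\left[\cov\!\big(\Wm_0\mid S_t,\Xm,\Ym_t\big)\right]=\Sigmam_{0|t}(\Id-\Omegam_t\Sigmam_{0|t})$ for each $t$. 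The nesting $S_t\subseteq S_{t+1}$ is precisely what realizes the ordering of the $\Omegam_t$'s, and positive semidefiniteness of the noise covariances forces $\Omegam_t\preceq\Sigmam_{0|t}^{-1}$, in particular $\Omegam_T\preceq\Sigmam_{0|T}^{-1}$. With this choice $I(\Ym;S_t\mid\Ym_t)$ and $I(\Xm;S_t,\Ym_t)$ become log-determinants; substituting $\Ym=\Xm+\Wm_0$ and $\Ym_t=\Xm+\Wm_t$ and simplifying with the matrix inversion lemma and the definition \eqref{eq:vect_crosscovarianc} of $\Sigmam_{0|t}$, they reduce to the right-hand sides of \eqref{eq:vec_region_b} and \eqref{eq:vec_region_a} with equality. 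Ranging over all admissible $(\Omegam_t)$ and using rate-sharing for the partial-sum constraints $\sum_{l\le t}R_l$ (inherited verbatim from the structure of \cite[Theorem~1]{tian2007multistage}) exhausts the region.

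\textbf{Converse.} Let $(U_1,\dots,U_T)$ be auxiliaries supplied by \eqref{eq:region1}, so that $S_t\defeq(U_1,\dots,U_t)$ is a stochastic function of $\Ym$ alone, $\sum_{l\le t}R_l\ge I(\Ym;S_t\mid\Ym_t)$ and $\Delta_t\le I(\Xm;S_t,\Ym_t)$. Fix $t$ and \emph{define} $\Omegam_t$ by $\mathbb{E}\!\left[\cov\!\big(\Wm_0\mid S_t,\Xm,\Ym_t\big)\right]=\Sigmam_{0|t}(\Id-\Omegam_t\Sigmam_{0|t})$. Since conditioning does not increase the MMSE matrix and $\cov(\Wm_0\mid\Xm,\Ym_t)=\Sigmam_{0|t}$, this matrix lies between $\zerov$ and $\Sigmam_{0|t}$, so $\zerov\preceq\Omegam_t\preceq\Sigmam_{0|t}^{-1}\preceq\Sigmam_{0|T}^{-1}$ (the last step from the degradedness $\Sigmam_T\preceq\cdots\preceq\Sigmam_1$, which forces $\Sigmam_{0|T}\preceq\Sigmam_{0|t}$); the nestedness $S_t\subseteq S_{t+1}$ and the fact that $\Ym_t$ is a degraded version of $\Ym_{t+1}$ give $\Omegam_1\preceq\cdots\preceq\Omegam_T$ (possibly after a standard enhancement of the $\Omegam_t$'s that preserves the two bounds below). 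Inequality \eqref{eq:vec_region_b} is the elementary one: writing $I(\Ym;S_t\mid\Ym_t)=I(\Xm;S_t\mid\Ym_t)+I(\Ym;S_t\mid\Xm,\Ym_t)$ and $I(\Xm;S_t,\Ym_t)=I(\Xm;\Ym_t)+I(\Xm;S_t\mid\Ym_t)$, using $I(\Xm;\Ym_t)=\log|\Id+\Sigmam_x\Sigmam_t^{-1}|$ and the maximum-entropy bound $h(\Wm_0\mid S_t,\Xm,\Ym_t)\le\log\!\big((\pi e)^m\,|\Sigmam_{0|t}|\,|\Id-\Omegam_t\Sigmam_{0|t}|\big)$, and combining with $\sum_{l\le t}R_l\ge I(\Ym;S_t\mid\Ym_t)$, yields \eqref{eq:vec_region_b}. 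Inequality \eqref{eq:vec_region_a} is the hard one: it requires a \emph{lower} bound on $h(\Xm\mid S_t,\Ym_t)$ in terms of the \emph{same} $\Omegam_t$, i.e.\ a vector Gaussian extremal inequality relating $\cov(\Xm\mid S_t,\Ym_t)$ to $\cov(\Wm_0\mid S_t,\Xm,\Ym_t)$, legitimate because $\Ym=\Xm+\Wm_0$, $\Ym_t=\Xm+\Wm_t$ with jointly Gaussian noises and $S_t$ independent of $(\Xm,\Ym_t)$ given $\Ym$. I would obtain it by the Fisher-information / matrix entropy-power-inequality argument used for the vector Gaussian CEO problem under logarithmic loss in \cite{ugur2020vector}, specialized to a single branch with the degraded side information $\Ym_t$; the determinant bookkeeping then produces the ``innovation'' prefactor $(\Id-\Sigmam_t^{-1}\Sigmam_{t0})$ around $\Omegam_t$, reflecting that only the part of $\Wm_0$ not predictable from $\Wm_t$ is genuinely refined by the description.

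\textbf{Main obstacle.} The crux is that last step: exhibiting one matrix $\Omegam_t$ certifying \eqref{eq:vec_region_a} and \eqref{eq:vec_region_b} simultaneously. Plain maximum entropy handles the rate side, but lower-bounding $h(\Xm\mid S_t,\Ym_t)$ genuinely needs a matrix entropy-power (extremal) inequality, and the correlated side-information noise ($\Sigmam_{0t}\neq\zerov$) makes the algebra that produces the $(\Id-\Sigmam_t^{-1}\Sigmam_{t0})$ factor — together with the verification that the constructed $\Omegam_t$ respect the full ordering $\Omegam_1\preceq\cdots\preceq\Omegam_T\preceq\Sigmam_{0|T}^{-1}$ — the part that requires care; everything else reduces to Gaussian determinant manipulations plus the stage-coupling already provided by \cite[Theorem~1]{tian2007multistage} (compare the quadratic-distortion treatment of \cite{xu2019vector}).
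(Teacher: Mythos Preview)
Your proposal is correct and follows essentially the same route as the paper: Gaussian test channels $U_t=\Ym+\Zm_t$ for achievability, and for the converse the \emph{same} matrix $\Omegam_t$ is defined via $\mmse(\Ym\mid \Um^t,\Xm,\Ym_t)=\Sigmam_{0|t}-\Sigmam_{0|t}\Omegam_t\Sigmam_{0|t}$, with the max-entropy (MMSE) upper bound giving \eqref{eq:vec_region_b} and the Cram\'er--Rao/Fisher lower bound on $h(\Xm\mid\Um^t,\Ym_t)$ together with the de~Bruijn--type identity linking $\Jm(\Xm\mid\Um^t,\Ym_t)$ to that MMSE (your ``Fisher-information argument from \cite{ugur2020vector}'') giving \eqref{eq:vec_region_a}. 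The paper makes this last link explicit by writing $\Xm=\hat{\Xm}_t+\tilde{\Xm}_t$ with $\hat{\Xm}_t=\EE[\Xm\mid\Ym,\Ym_t]$ and applying the Fisher--MMSE relation to the pair $(\Vm_1,\Vm_2)=((\Um^t,\Ym_t),\hat{\Xm}_t)$, which is exactly what produces the $(\Id-\Sigmam_t^{-1}\Sigmam_{t0})$ factors you anticipated; your flagging of the ordering $\Omegam_1\preceq\cdots\preceq\Omegam_T$ as needing care is apt, since the paper does not spell it out.
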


\begin{proof}
The proof is given in Appendix \ref{proof of vector}.
\end{proof}

\begin{remark}
Our result covers some special cases known in the literature. For the case of a single stage ($T=1$), region \eqref{eq:vec_region} reduces to
\begin{subequations}\label{eq:vec_region_T1}
\begin{align}
\Delta  & \leq \log \left|\Id \!+\!  \left[ \Sigmam_1^{-1} \!+\! (\Id \!-\! \Sigmam_1^{-1}\Sigmam_{10})  \Omegam (\Id \!-\! \Sigmam_1^{-1}\Sigmam_{10})  \right] \! \Sigmam_x \right| \label{eq:vec_region_a_T1}\\
\Delta  & \leq R + \log |\Id- \Omegam \Sigmam_{0|1} | + \log|\Id + \Sigmam_x \Sigmam_1^{-1} |,\label{eq:vec_region_b_T1}
\end{align}
\end{subequations}
Furthemore, for $T=1$ and the scalar Gaussian case we can simplify the region as
\begin{align}
R \geq \log \left[ \frac{\Sigma_x}{2^{-\Delta} \Sigma_x C_2 - \left(\Sigma_x + \Sigma_1  \right) C_1  } \right],\label{eq:vec_region_b_T1_scalar}
\end{align}
with
\begin{align*}
C_1 &=  \frac{\Sigma_0 \Sigma_1 - \Sigma_{01}^2}{(\Sigma_1 - \Sigma_{01})^2},\\
C_2 &= \frac{(\Sigma_x+\Sigma_1)^2}{\Sigma_x \Sigma_1} C_1 + \frac{\Sigma_x}{\Sigma_1}+1.
\end{align*}
These regions in \eqref{eq:vec_region_T1} and \eqref{eq:vec_region_b_T1_scalar} agrees with the relevance-complexity tradeoff of Theorem 3 and Theorem 1 in \cite{tian2009remote} respectively. For the case of a single stage and without side information $\Ym_1=\emptyset$, the region  \eqref{eq:vec_region} reduces to the well known relevance-complexity tradeoff of the Gaussian IB \cite{chechik2005information}. In particular, from \eqref{eq:vec_region_T1} we have
\begin{subequations}\label{eq:vec_region_T1_noSI}
\begin{align}
\Delta  & \leq \log \left|\Id + \Omegam \Sigmam_x \right|\\
\Delta  & \leq R + \log |\Id- \Omegam \Sigmam_{0} |,
\end{align}
\end{subequations}
which reduces to \cite[Theorem 5]{winkelbauer2014ratevector} and from \eqref{eq:vec_region_b_T1_scalar} we obtain
\begin{align}
\label{eq:vec_region_b_T1_scalar_noSI}
R \geq \log\left( \frac{\Sigma_x}{(\Sigma_x+\Sigma_0)2^{-\Delta}-\Sigma_0} \right).
\end{align}
which reduces to \cite[Theorem 2]{winkelbauer2014rate}.
\end{remark}
\begin{figure*}[ht]
     \centering
     \subfloat[ ][Relevance-complexity tradeoff $(R,\Delta_1)$ with $\Delta_2=2$.]{
     \includegraphics[width=0.48\textwidth]{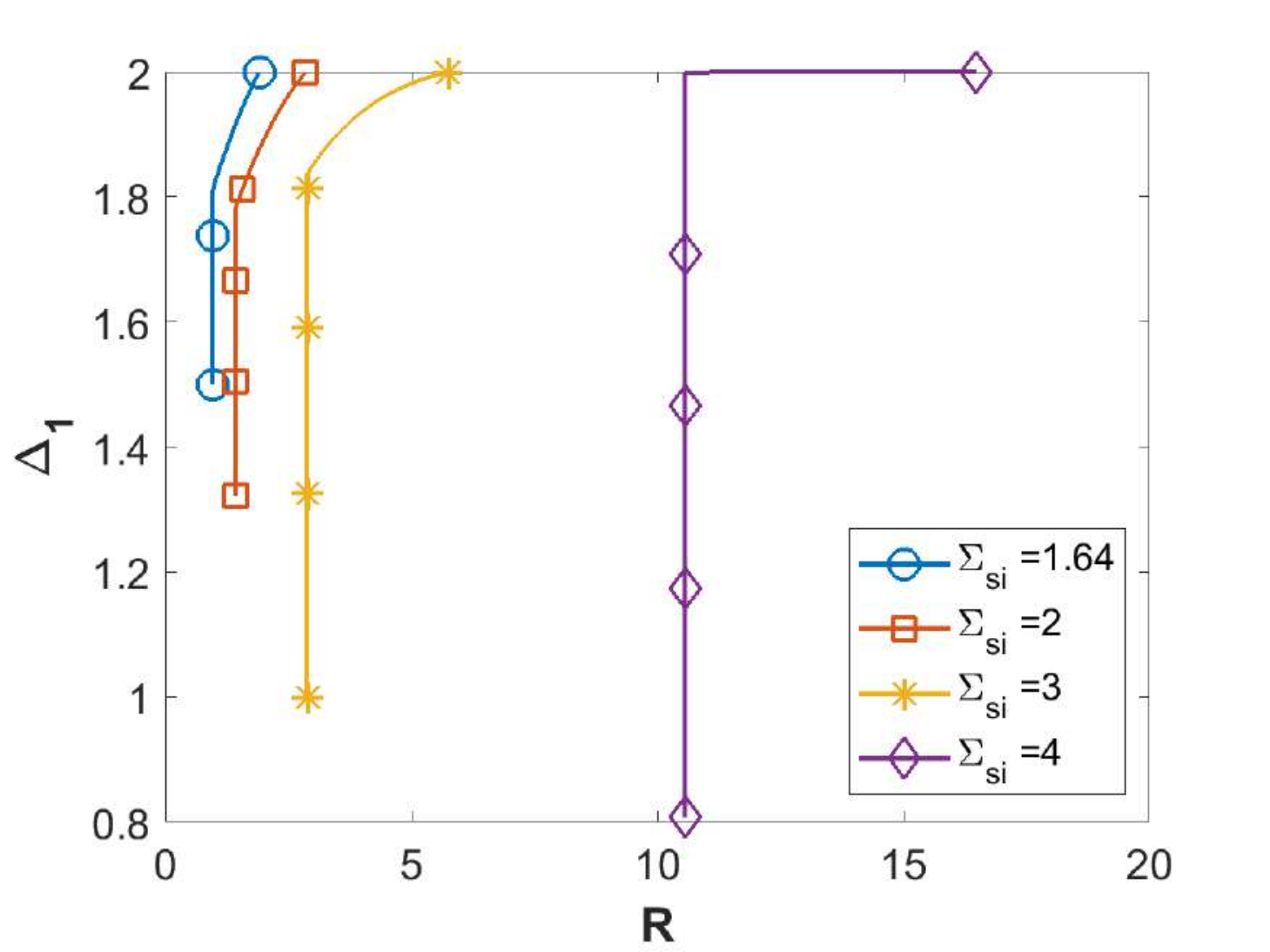}
     \label{fig:GR_SI_01}
     }
     \subfloat[ ][Relevance-complexity tradeoff $(R,\Delta_2)$ with $\Delta_1=1.5$.]{
     \includegraphics[width=0.48\textwidth]{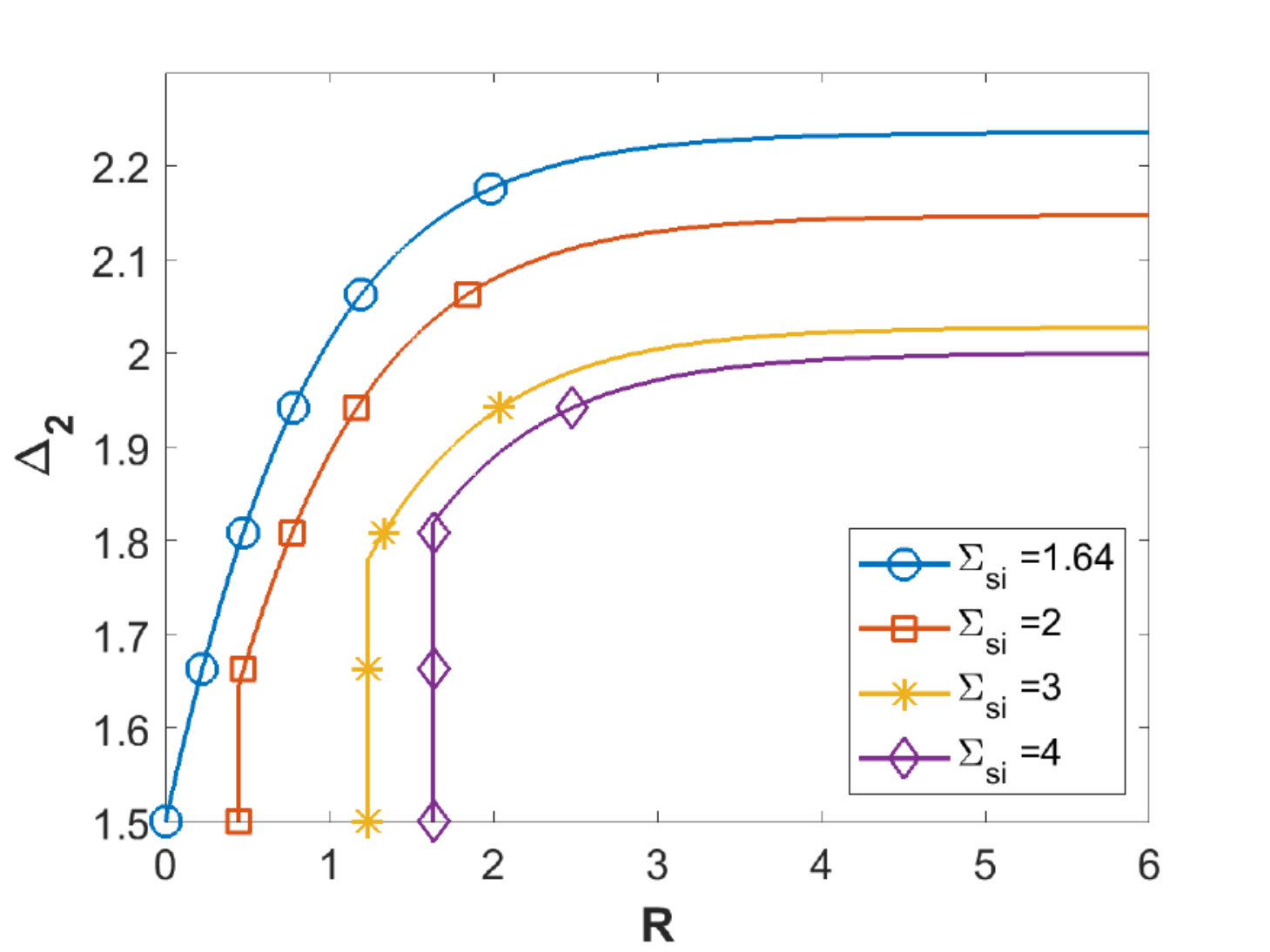}
     \label{fig:GR_SI_02}
     }
     \caption{Comparing the regions $(R,\Delta_1)$ and $(R,\Delta_2)$ for different values of $\Sigma_{\rm si}$.}
     \label{fig:GR_SI}
\end{figure*}

\subsection{Numerical Examples}\label{sect:Scalar Gaussian}
We evaluate the relevance-complexity region \eqref{eq:vec_region} for scalar Gaussian sources and channels with two stages $T=2$ by focusing on the symmetric complexity $R=R_1=R_2$. 
By further letting $\Sigma_{0t} = \gamma \Sigma_t$ for $\gamma\in [0,1]$ and $\Sigma_{0|t} = \Sigma_0 - \gamma^2 \Sigma_t$, the tradeoff between the symmetric complexity $R$ and the relevance $\Delta_1$, $\Delta_2$ for the scalar Gaussian case is given by
\begin{align}\label{eq:SymD}
\Delta_t  \leq \log \left(1 \!+\!  \left[ \Sigma_t^{-1} \!+\! (1 -\gamma)^2  \Omega_t \right] \Sigma_x \right), \forall t=1,2
\end{align}
 and
\begin{align}\label{eq:SymR}
&R \geq \max\left\{ \Delta_1  - \log (1- \Omega_1 \Sigma_{0|1} ) -  \log\left(1 + \frac{\Sigma_x}{\Sigma_1}\right),  \right. \nonumber \\
&\qquad \left. \frac{1}{2} \left( \Delta_2  - \log (1- \Omega_2 \Sigma_{0|2} ) -  \log\left(1 + \frac{\Sigma_x}{\Sigma_2}\right)\right) \right\}.
\end{align}
Focusing on the symmetric side information such that $\Sigma_1= \Sigma_2=\Sigma_{\rm si}$, 
Fig.~\ref{fig:GR_SI} illustrates the tradeoff between $\Delta_1, \Delta_2$ and $R$ for the different values of $\Sigma_{\rm si}$ by letting 
$\Sigma_x=3$, $\Sigma_0=1$, $\gamma=0.25$, and $\Delta_1=1.5$. 
From $0\leq \Omega_1 \leq \Omega_2 \leq \Sigma_{0|2}^{-1}$, it readily follows that we have $1.64\leq \Sigma_{\rm si} \leq 4$. 

In Fig. \ref{fig:GR_SI_01}, we observe the impact of side information noise $\Sigma_{\rm si}$ result to the relevance $\Delta_1$. For each value of $\Sigma_{\rm si}$, we need a minimum value of $R$ such that the total complexity in the second stage $2R$ satisfies our assumption $\Delta_2=2$. For example, for $\Sigma_{\rm si}=2$ (red curve), the minimum value of $R=1.42$ is required in order to achieve the relevance $\Delta_2=2$ in the second stage. On the other hand, the complexity $R=1.42$ results in the maximum relevance $\Delta_1=1.78$ in the first stage. Thus, we can achieve any value of $\Delta_1<1.78$ for the same complexity $R=1.42$. Similarily in Fig. \ref{fig:GR_SI_02}, we observe that there exists a minimum value of $R$ such that the complexity in the first stage $R$ satisfies the relevance at the first stage $\Delta_1=1.5$. For example, with $\Sigma_{\rm si}=4$ (purple), the minimum complexity is $R=1.63$ yielding $\Delta_2=1.82$. Clearly, as the side information noise decreases, the tradeoff improves.

\section{Variational Scalable IB}\label{sect:Variational}
\subsection{Variational Bound and DNN Parameterization}
\newcommand{\tbetav}{\tiny{\betav}}
\newcommand{\tthetav}{\tiny{\thetav}}

So far we assumed that the joint distribution of $(X, Y, \{Y_t\}_t)$ was known. This section addresses a more practical case when the joint distribution is unknown and can be only estimated empirically through a set of training data.  
Since the problem of learning the encoder and the decoder that minimizes the complexity region for $T$ relevance constraints is difficult, we derive a variational bound that enables a neural parameterization of the relevance-complexity region given in \eqref{eq:region2}. For simplicity, we focus on the sum complexity constraint under $T$ relevance constraints by ignoring the side information. 
Our objective is to minimize for a given set of $\betav=(\beta_1, \dots, \beta_T)$
\begin{align}\label{eq:objective}
\Lc_{\tbetav}(\Pm) =I(U_T;Y) + \sum_{t=1}^T \beta_t H(X|U_t)
\end{align}
over a set of pmfs $\Pm\eqdef \{P_{U_1, \dots, U_T | Y}(u_1, \dots, u_T|y) \}$ with $u_t\in \Uc_t$ for any $t$ and $y\in \Yc$. In order to derive a tight variational bound on \eqref{eq:objective}, we consider a set of $T$ arbitrary decoding distributions $\{Q_{X|U_t}(x|u_t)\}_{t=1}^T$ 
for $u_t\in \Uc_t, x\in \Xc$ and
$T$ arbitrary prior distributions $\{Q_{U_t}(u_t)\}_{t=1}^T$. We let $\Qm\eqdef \{Q_{X|U_t}(x|u_t), Q_{U_t}(u_t)\}_{t=1}^T $ denote the set of these pmfs. 
Using similar techniques of \cite{alemi2016deep, kingma2013auto}, it ready follows that the variational upper bound is given by
\begin{equation}\label{eq:vb}
 \Lc^{\rm VB}_{\tbetav}(\Pm, \Qm)\eqdef  D_{KL}(P_{U_T|Y} || Q_{U_T}) - \sum_{t=1}^T \beta_t \EE[ \log Q_{X|U_t}],
\end{equation}
where $D_{KL}( \cdot || \cdot)$ denotes the Kullback-Leiber divergence.  
By adapting \cite[Lemma 1]{estella2021distributed} to our setting, we can prove the following. 
\begin{lemma}\label{lemma-variational}
For fixed pmfs $\Pm$ and $\betav$, we have
\begin{align}
\Lc_{\tbetav}(\Pm)\leq \Lc^{\rm VB}_{\tbetav}(\Pm, \Qm), \;\; \forall \Qm.
\end{align}
Moreover, there exists a unique $\Qm$ that satisfies $\min_{\Qm} \Lc^{\rm VB}_{\tbetav}(\Pm, \Qm) = \Lc_{\tbetav}(\Pm)$, and is given by
\begin{align}\label{lemma2_equality}
Q^*_{X|U_t}=P_{X|U_t}, \;\;\;\;\;\;\;\;\; Q^*_{U_t}=P_{U_t} \;\;\;\;\;\; t\in[T].
\end{align}
\end{lemma}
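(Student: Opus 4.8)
The plan is to establish the inequality $\Lc_{\tbetav}(\Pm)\leq \Lc^{\rm VB}_{\tbetav}(\Pm,\Qm)$ by rewriting the two $\Qm$-dependent terms of $\Lc^{\rm VB}$ in terms of the true posteriors, with the gap expressed as a sum of nonnegative KL divergences. First I would write, for the prior term,
\begin{align}
D_{KL}(P_{U_T|Y}\,\|\,Q_{U_T})
&= \EE\!\left[\log\frac{P_{U_T|Y}(U_T|Y)}{Q_{U_T}(U_T)}\right]\nonumber\\
&= I(U_T;Y) + D_{KL}(P_{U_T}\,\|\,Q_{U_T}),\nonumber
\end{align}
using $\EE[\log P_{U_T|Y}/P_{U_T}] = I(U_T;Y)$ and the fact that $\EE[\log P_{U_T}/Q_{U_T}] = D_{KL}(P_{U_T}\,\|\,Q_{U_T})\geq 0$. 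Then for each decoder term I would add and subtract $\log P_{X|U_t}$:
\begin{align}
-\EE[\log Q_{X|U_t}(X|U_t)]
&= H(X|U_t) + \EE\!\left[\log\frac{P_{X|U_t}(X|U_t)}{Q_{X|U_t}(X|U_t)}\right]\nonumber\\
&= H(X|U_t) + \EE_{U_t}\big[D_{KL}(P_{X|U_t}\,\|\,Q_{X|U_t})\big].\nonumber
\end{align}
Here $H(X|U_t)=-\EE[\log P_{X|U_t}(X|U_t)]$ by definition, and the remaining term is an expectation over $U_t$ of a KL divergence, hence nonnegative.

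Summing these identities with weights $\beta_t\geq 0$ and adding the prior term gives
\begin{equation}
\Lc^{\rm VB}_{\tbetav}(\Pm,\Qm) = \Lc_{\tbetav}(\Pm) + D_{KL}(P_{U_T}\,\|\,Q_{U_T}) + \sum_{t=1}^T \beta_t\,\EE_{U_t}\big[D_{KL}(P_{X|U_t}\,\|\,Q_{X|U_t})\big],\nonumber
\end{equation}
which immediately yields $\Lc_{\tbetav}(\Pm)\leq \Lc^{\rm VB}_{\tbetav}(\Pm,\Qm)$ for all $\Qm$ since every added term is nonnegative. For the equality claim, note that the surplus vanishes if and only if each KL divergence is zero, i.e. $Q_{U_T}=P_{U_T}$ and $Q_{X|U_t}=P_{X|U_t}$ ($P_{U_t}$-a.e.) for every $t\in[T]$; uniqueness follows from strict positivity of KL divergence away from equality. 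One subtlety: the marginals $P_{U_t}$ and $P_{X|U_t}$ that appear are those induced by $\Pm$ and the (fixed but unknown) $P_{XY}$, so the choice in \eqref{lemma2_equality} is well-defined only up to sets of measure zero, and I would state the minimizer accordingly; I should also check that if some $\beta_t=0$ the term $Q^*_{X|U_t}=P_{X|U_t}$ is vacuous but harmless, while $Q^*_{U_T}$ is always pinned down.

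The main obstacle is essentially bookkeeping rather than conceptual: I must be careful that the $Q_{U_t}$ for $t<T$ do not appear in $\Lc^{\rm VB}$ at all (only $Q_{U_T}$ does), so the statement $Q^*_{U_t}=P_{U_t}$ for all $t\in[T]$ in \eqref{lemma2_equality} should be read as constraining only $Q_{U_T}$ among the priors — I would add a one-line remark clarifying this, or alternatively note that the other $Q_{U_t}$ are free and the equation records the natural/canonical choice. Beyond that, the only thing to verify is that all expectations are well-defined (finiteness of the relevant mutual information and entropy terms, which is standard for the discrete i.i.d. setup assumed here), so that the rearrangements above are legitimate.
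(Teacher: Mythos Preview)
Your proposal is correct and follows essentially the same approach as the paper: both proofs rewrite $D_{KL}(P_{U_T|Y}\|Q_{U_T})=I(U_T;Y)+D_{KL}(P_{U_T}\|Q_{U_T})$ and $-\EE[\log Q_{X|U_t}]=H(X|U_t)+D_{KL}(P_{X|U_t}\|Q_{X|U_t})$, then conclude by nonnegativity of the KL terms. Your remarks on the $Q_{U_t}$ for $t<T$ being unconstrained and on the $\beta_t=0$ case are valid observations that the paper does not spell out.
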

\begin{proof}
The proof is given in Appendix \ref{proof of lemma2}.
\end{proof}
Now, we present a practical method to minimize \eqref{eq:vb} by parameterizing the encoder $\{P_{U_t|Y}\}_t$ and the decoder $\{Q_{X|U_t}\}_t$ through 
Deep Neural Networks (DNN) parameters $\thetav=(\theta_1, \dots, \theta_T)$ and $\rhov= (\rho_1, \dots, \rho_T)$. 
This enables us to formulate \eqref{eq:vb} in terms of $\thetav, \rhov$ and optimize it using reparameterization trick \cite{kingma2013auto}, Monte Carlo sampling as well as the derivative computation 
$\nabla_{\thetav,\rhov} \Lc^{\rm VB}$.
We let $P_{\thetav}(u_t|y)$ denote the family of the encoding probability distribution $P_{U_t|Y}$ over $\Uc_t$ for each element of $\Yc$, parameterized by the output of a DNN $f_{\theta_t}$ with parameters $\theta_t$ for $t\in [T]$. 
Similarly, let $Q_{\rho_t}(x|u_t)$ denote the family of the $t$-stage decoding distribution $Q_{X|U_t}$ over $\Xc$ for each element of $\Uc_t$, parameterized by the output of a DNN $f_{\rho_t}$ with parameters $\rho_t$ for $t\in [T]$. Finally, we define also
$Q_{\eta_t}(u_t)$ as the family of the prior distributions $Q_{U_t}(u_t)$ over $\Uc_t$ that do not depend on the DNN.  Then, we have  
\begin{align}
\min_{\Pm} \min_{\Qm} \Lc^{\rm VB}_{\tbetav}(\Pm, \Qm) \leq \min_{\thetav, \rhov, \etav} \Lc^{\rm DNN}_{\tbetav}(\thetav, \rhov, \etav)
\end{align}
where we define 
\begin{align}\label{eq:DNN}
 \Lc^{\rm DNN}_{\tbetav}(\thetav, \rhov, \etav)& \eqdef D_{KL}(P_{\thetav}(U_T|Y) || Q_{\eta_T}(U_T)) \nonumber\\
& \;\;\;\; - \sum_{t=1}^T \beta_t \EE[ \log Q_{\rho_t}(X|U_t)].
\end{align}
In order to approximate the objective function \eqref{eq:DNN} through $N$ training data samples $\left\{x_i,y_i \right\}_{i=1}^N$, we generate $M$ independent samples $\{u_{t, i, j}\}_{j=1}^M \sim P_{ \theta_t}(u_t|y)$ such that the 
empirical objective for the $i$-th data sample is given by
  \begin{align}\label{eq:empi}
 \Lc^{\rm emp}_{\tbetav, i}(\thetav, \rhov, \etav)& \eqdef D_{KL}(P_{\thetav}(u_{T, i}|y_i) || Q_{\eta_T}(u_{T,i})) \nonumber\\
&- \frac{1}{M} \sum_{j=1}^M \sum_{t=1}^T \beta_t \EE[ \log Q_{\rho_t}(x_i| u_{t, i, j})].
\end{align}
Finally, we minimize the empirical objective over $N$ training data samples as $\min_{\thetav, \rhov, \etav} \frac{1}{N} \sum_{i=1}^N \Lc^{\rm emp}_{\tbetav, i}(\thetav, \rhov, \etav)$.

\begin{figure*}[ht]
     \centering
     \subfloat[ ][The model trained with half of original data ($\alpha=1/2$).]{
     \includegraphics[width=0.45\textwidth]{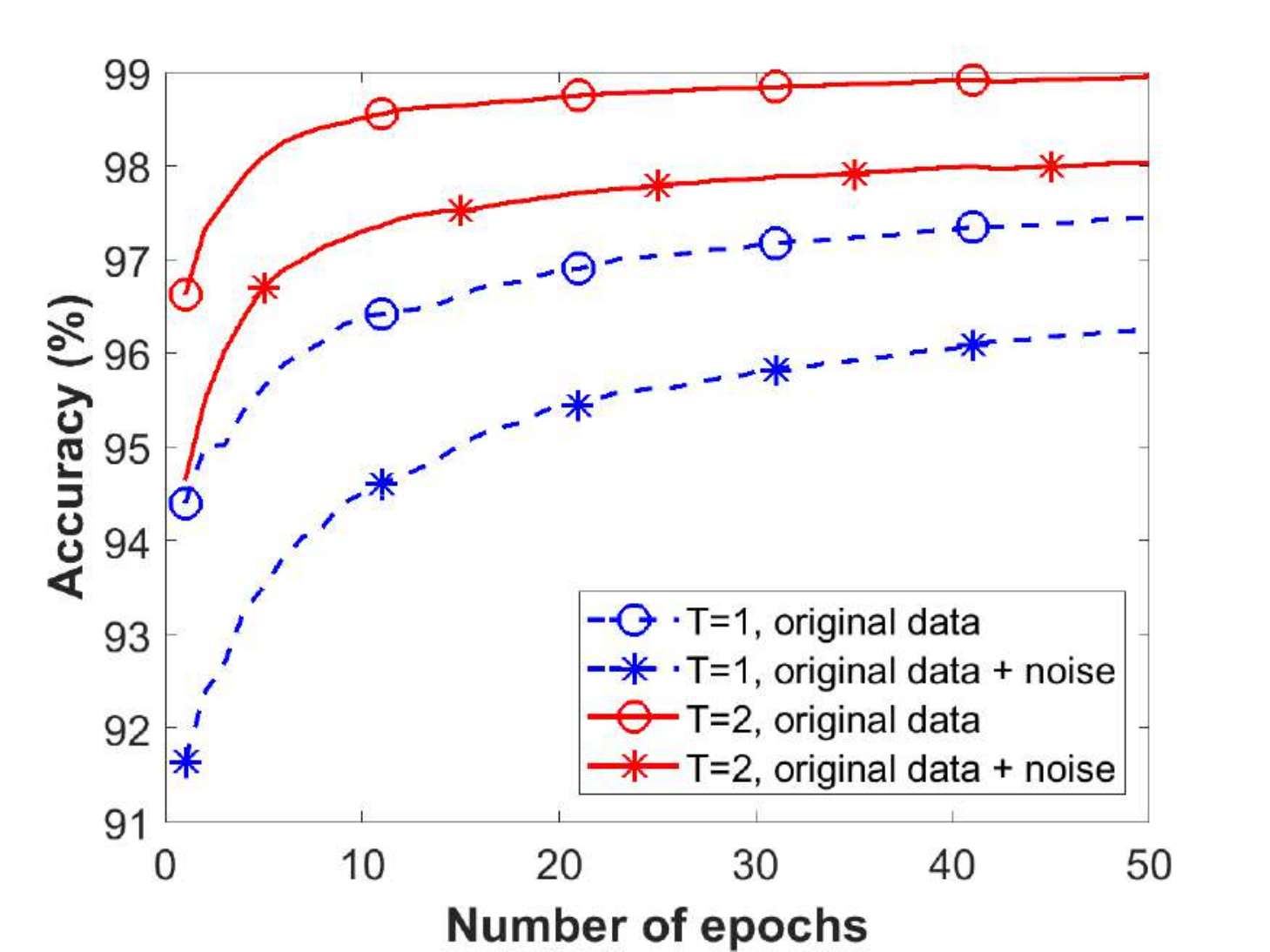}
     \label{fig:GaussSigma3_half}
     }
     \subfloat[ ][The model trained only with original data set ($\alpha=0$).]{
     \includegraphics[width=0.45\textwidth]{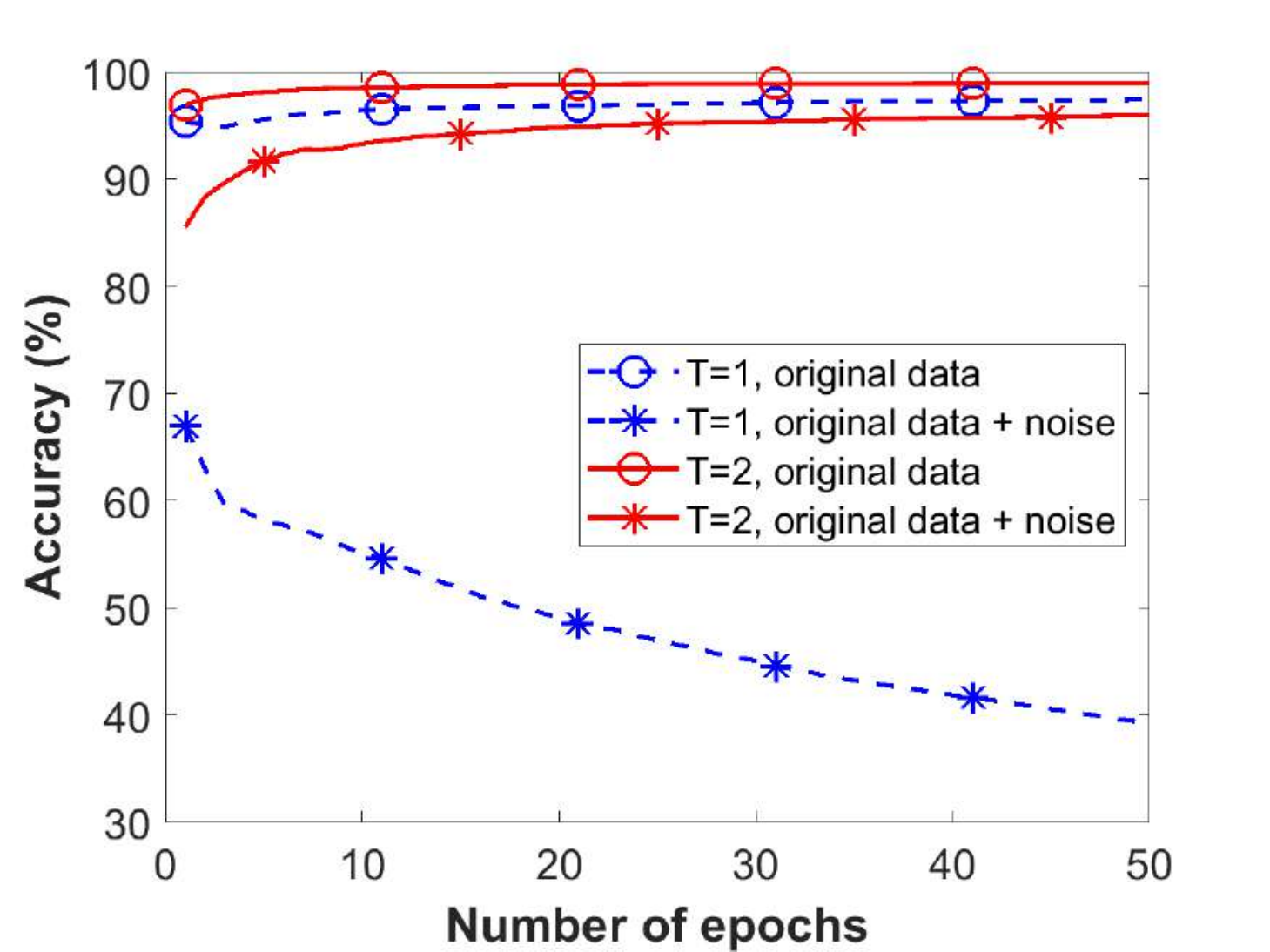}
     \label{fig:GaussSigma3_zero}
     }
     \caption{Accuracy versus the number of epochs.}
     \label{fig:GaussSigma3}
\end{figure*}
\subsection{Experimental Results}
We apply Algorithm \ref{alg:variational} to classify the MNIST dataset \cite{lecun1998gradient}, consisting of 70000 labeled images of handwritten digits between $\{0, \dots, 9\}$. We consider variational scalable information bottleneck with $T=2$ and compare with a baseline with $T=1$. In both scenarios, we train the model with $N=50000$ images such that a fraction $\alpha$ of them are original data contaminated by 
additional noise and the remaining fraction $1-\alpha$ are original data itself, where the noise is modeled as independent Gaussian with zero mean and standard deviation $0.3$ to each pixel and further 
truncated to $[0,1]$ (Fig. \ref{fig:noisy-image}). 
After training the model for each epoch, we test the model with two separate test datasets each containing 10000 images. The first test dataset is noise-free and the second one is noisy with the same noise standard deviation of $0.3$. Moreover, in both scenarios we consider a same standard convolutional neural network (CNN) architecture (Table \ref{table:cnn}) that can achieve the highest accuracy of $99.8\%$ in the noiseless test for $T=1$ and $\alpha=0$. In Table \ref{table:cnn}, we set $\textit{latent dimension} =20$ for $T=1$ and $\textit{latent dimension} =10$ in each stage for $T=2$ for fair comparison.
 \begin{algorithm}
 \caption{Variational Scalable IB Algorithm}
 \label{alg:variational}
 \begin{algorithmic}[1]
 \renewcommand{\algorithmicrequire}{\textbf{Input:}}
 \renewcommand{\algorithmicensure}{\textbf{Output:}}
 \REQUIRE Training dataset $\mathcal{D}$, parameter $\beta_t, t\in[T]$, DNN, parameters $\thetav, \rhov, \etav$.
 \ENSURE  Optimal $\thetav, \rhov, \etav$; tuple $(\sum_{t=1}^T R_t, \Delta_1, \dots, \Delta_T)$.
 \\ \textit{Initialization} : Initialize $\thetav, \rhov, \etav$ and set iteration $k=0$.
  \REPEAT
   	\STATE Randomly select $b$ minibatch samples $X^b$ and the corresponding $Y^b$ from $\mathcal{D}$.
  	\STATE Generate $M$ independent samples $\{u_{t, i, j}\}_{j=1}^M \sim P_{ \theta_t}(u_t|y)$ using reparameterization trick \cite{kingma2013auto}.
  	\STATE Compute $\nabla_{\thetav, \rhov, \etav} \sum_{i=1}^b \Lc^{\rm emp}_{\tbetav, i}(\thetav, \rhov, \etav)$ for $(X^b, Y^b)$.
  	\STATE Update $(\thetav, \rhov, \etav)$ using the estimated gradient.
  \UNTIL {convergence of $(\thetav, \rhov, \etav)$.}
 \end{algorithmic} 
 \end{algorithm}
\begin{figure}
\begin{center}	
\includegraphics[width=0.3\textwidth]{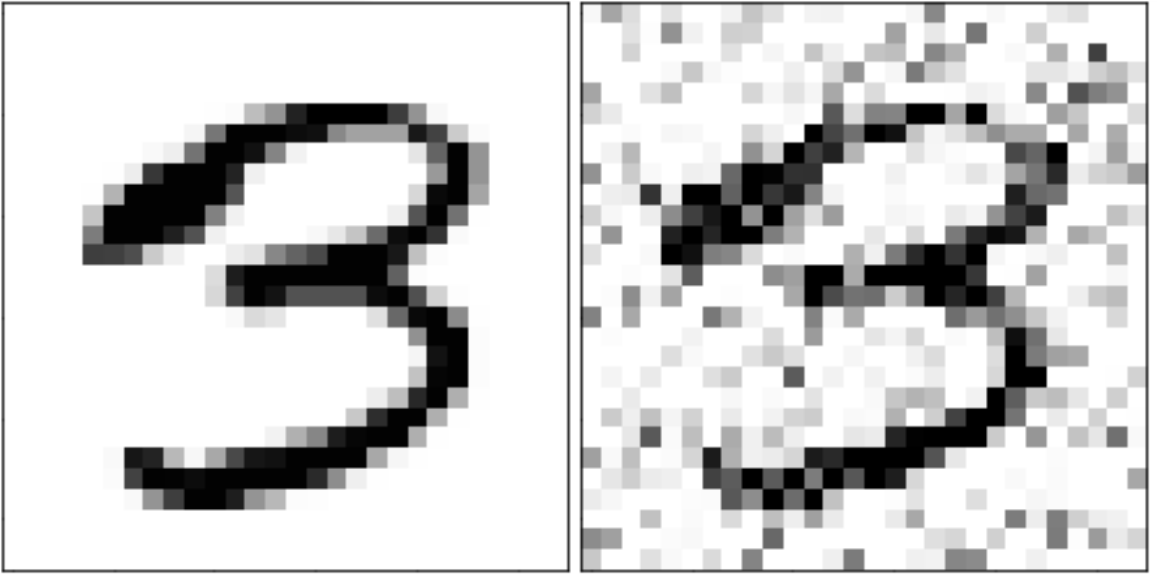}
\caption{An example of MNIST original data (left) and original data with additional noise (right).}
\label{fig:noisy-image}
\end{center}
\end{figure} 	
\begin{table}\caption{DNN architecture}
\vspace*{-0.4em}
\centering
\begin{tabular}{@{}cc@{}}
\toprule
Encoder & DNN Layers                             \\ \midrule
        & conv. ker. {[}5,5,32{]}-Relu           \\
        & maxpool {[}2,2,2{]}                    \\
        & conv. ker. {[}5,5,64{]}-Relu           \\
        & maxpool {[}2,2,2{]}                    \\
        & dense {[}1024{]}-Relu                  \\
        & dropout 0.4                            \\
        & dense {[}$2\times$ latent dimension{]} \\
Decoder & dense {[}100{]}-Relu                   \\
        & dense {[}10{]}-Softmax                 \\ \bottomrule
\end{tabular}
\label{table:cnn}
\vspace*{-2em}
\end{table}

Fig.~\ref{fig:GaussSigma3} compares the classification performance in terms of accuracy in $\%$ versus the number of epochs 
for $T=1,2$. 
In Fig. \ref{fig:GaussSigma3_half}, the model is trained with $\alpha=1/2$. It can be observed that the proposed two-stage scheme ($T=2$) achieves higher accuracy for both of the test datasets. In Fig. \ref{fig:GaussSigma3_zero}, the similar results are obtained when the training is done only with the original data set ($\alpha=0$). In this figure, for $T=1$ and noisy test dataset, the accuracy decreases up to a very low value of $40\%$ because the DNN learns the original train dataset better after each epoch and is unable to predict the noisy test dataset. On the contrary, we observe that the proposed scheme with $T=2$ is able to generalize to unseen noisy data set at the cost of higher complexity. 

\vspace*{-0.2em}
\appendices
\section{Proof of Theorem \ref{Vector_Gaussian_proposition}}\label{proof of vector}
For the achievability, by assuming $\Um_t=\Ym+ \Zm_t$ where $\Zm_t \sim \Nc_{\Cc}(\zerov, \Omegam_{z_t})$  is independent of other variables. We further assume a degraded structure 
$\Omegam_{z_T}\preceq \dots \preceq \Omegam_{z_1}$ satisfying the Markov chain in Remark \ref{Changing Relevance Formulation}. 
For this choice, we look at the relevance term \eqref{eq:region2-relevance}.
\begin{align*}
\Delta & \leq I(\Xm; \Um_t, \Ym_t)\\
&= h(\Um_t, \Ym_t) - h(\Um_t,\Ym_t|\Xm)\\
& =h(\Um_t, \Ym_t) -h(\Wm_0+\Zm_t,\Wm_t)\\
&= \log\frac{|\cov(\Um_t, \Ym_t)|}{|\cov(\Wm_0+\Zm_t,\Wm_t)|} =
 \log|\Id_{2m} +\Am^{-1}\Bm| 
\end{align*}
where we let
\begin{align*}
\Am &= 
\left[
\begin{matrix}
\Sigmam_0 + \Omegam_{z_t} & \Sigmam_{0t}\\
\Sigmam_{t0} & \Sigmam_t 
\end{matrix}\right], \;\;
\Bm= \left[
\begin{matrix} 
\Sigmam_x  & \Sigmam_x \\
\Sigmam_x & \Sigmam_x
\end{matrix}\right]
\end{align*}
By rather straightforward algebra, we can show that $|\Id_{2m} +\Am^{-1}\Bm|$ coincides with the RHS of \eqref{eq:vec_region_a}.

Next, we evaluate the complexity term in \eqref{eq:region1-rate}
\begin{align*}
\sum_{l=1}^{t}R_l  & \geq h( \Um_t|\Ym_t) -h( \Um_t|\Ym, \Ym_t)\\
&\stackrel{(a)}\geq \Delta_t + h( \Um_t, \Ym_t|\Xm)  -h(\Ym_t) -h(\Um_t|\Ym, \Ym_t)\\
&= \Delta_t + h(\Ym_t|\Xm) + h(\Um_t| \Ym_t, \Xm)\\
& \;\;\;\; -h(\Ym_t) -h(\Um_t|\Ym, \Ym_t)\\
&=  \Delta_t +  h(\Wm_t) + h(\Um_t | \Ym_t, \Xm) -h(\Ym_t) -h(\Zm_t )\\
&\stackrel{(b)} = \Delta_t + \log  |(\pi e) \Sigmam_t| + \log  |\pi e (\Sigmam_{0|t}  + \Omegam_{z_t})|\\
&\;\;\;\; - \log  |\pi e (\Sigmam_x+ \Sigmam_t)|-\log |(\pi e) \Omegam_{z_t}|\\
&=\Delta_t 
- \log  | \Sigmam_x \Sigmam_t^{-1}+ \Id| -\log \frac{ 1}{ |\Omegam_{z_t}^{-1}\Sigmam_{0|t}  +\Id|}\\
& \stackrel{(c)}= \Delta_t 
- \log  | \Sigmam_x \Sigmam_t^{-1}+ \Id| -\log |( \Id + \Omegam^{-1}_{z_t} \Sigmam_{0|t} )^{-1}|
\end{align*}
where (a) follows from the relevance terms \eqref{eq:region2-relevance}; (b) 
follows from $h(\Um_t | \Ym_t, \Xm)=h(\Zm_t + \Wm_0| \Wm_t)$; (c) follows from $ \frac{1}{|\Am|} = |\Am^{-1}|$. By choosing $\Omegam_{z_t}= \Omegam_t^{-1} - \Sigmam_{0|t}$ such that  $0 \preceq \Omegam_t \preceq \Sigmam_{0|t}^{-1}$ and $\zerov \preceq \Omegam_{1} \preceq \dots \preceq \Omegam_{T} \preceq \Sigmam_{0|T}^{-1}$, and plugging it in the last expression, we obtain the desired expression \eqref{eq:vec_region_b}, hence 
completes the achievability proof.

In order to prove the converse part, we first provide two useful lemmas.
\begin{lemma}\cite{ugur2020vector,ekrem2014outer,dembo1991information}\label{Fisher-MMSE}
Let $(\bm{X},\bm{Y})$ be a pair of complex random vectors. 
Then we have
\begin{align*}
\log |(\pi e)\bm{J}^{-1}(\bm{X}|\bm{Y})| \leq h(\bm{X}|\bm{Y}) \leq \log |(\pi e)\mmse(\bm{X}|\bm{Y})|
\end{align*}
where $\Jm(\Xm|\Ym) \eqdef \EE[\nabla \log p(\Xm |\Ym)\nabla \log p(\Xm |\Ym)^\H ]$ denotes the conditional Fisher information matrix and $\mmse(\Xm|\Ym)\eqdef \EE[(\Xm-\EE[\Xm|\Ym])(\Xm-\EE[\Xm|\Ym])^\H ]$ denotes the minimum mean square error (MMSE) matrix.
\end{lemma}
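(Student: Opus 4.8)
The plan is to establish the two inequalities separately; throughout, the bounds are read for circularly symmetric complex vectors, for which $\log|(\pi e)\Km|$ is the maximal differential entropy over complex vectors of covariance $\Km$ (with equality only in the Gaussian case).

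For the right-hand inequality I would invoke the maximum-entropy principle. Since $\EE[\Xm\mid\Ym]$ is a measurable function of $\Ym$, subtracting it leaves the conditional differential entropy unchanged, and conditioning cannot increase entropy; hence
\[
h(\Xm\mid\Ym)=h\big(\Xm-\EE[\Xm\mid\Ym]\,\big|\,\Ym\big)\le h\big(\Xm-\EE[\Xm\mid\Ym]\big).
\]
The residual $\Xm-\EE[\Xm\mid\Ym]$ has covariance exactly $\mmse(\Xm\mid\Ym)$ by the definition in the statement, so the maximum-entropy bound gives $h(\Xm\mid\Ym)\le\log|(\pi e)\,\mmse(\Xm\mid\Ym)|$, which is the claimed upper bound.

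For the left-hand inequality I would first freeze $\Ym=\yv$ and work with the conditional density $p(\cdot\mid\yv)$ on $\CC^m$, whose ordinary Fisher information matrix is $\Jm(\Xm\mid\yv)$, so that by construction $\Jm(\Xm\mid\Ym)=\EE_{\Ym}\big[\Jm(\Xm\mid\yv)\big]$. For a single vector, the estimate $h(\Xm\mid\yv)\ge\log|(\pi e)\,\Jm^{-1}(\Xm\mid\yv)|$ is the determinant form of Stam's inequality: the de Bruijn identity applied along the Gaussian heat flow yields its trace form, and feeding that to an invertible linear image of $\Xm$ that whitens $\Jm(\Xm\mid\yv)$ — using that $|\Jm|^{1/m}e^{h/m}$ is invariant under invertible linear maps — upgrades it to the determinant form; this is precisely what is borrowed from the cited references. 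It then remains to average over $\Ym$. Since $A\mapsto\log|A|$ is concave on the cone of positive definite matrices, Jensen's inequality gives $\EE_{\Ym}\big[\log|\Jm(\Xm\mid\yv)|\big]\le\log\big|\EE_{\Ym}[\Jm(\Xm\mid\yv)]\big|=\log|\Jm(\Xm\mid\Ym)|$, equivalently $\EE_{\Ym}\big[\log|(\pi e)\,\Jm^{-1}(\Xm\mid\yv)|\big]\ge\log|(\pi e)\,\Jm^{-1}(\Xm\mid\Ym)|$; combining this with $h(\Xm\mid\Ym)=\EE_{\Ym}[h(\Xm\mid\yv)]$ yields the lower bound.

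The main obstacle is the single-vector Fisher bound $h(\Xm\mid\yv)\ge\log|(\pi e)\,\Jm^{-1}(\Xm\mid\yv)|$ — the matrix-valued (and complex) form of the entropy–Fisher/Stam inequality, which is why the whitening reduction to the scalar trace form is the crux — together with the regularity it presupposes (an absolutely continuous conditional law with a differentiable density and $\Jm(\Xm\mid\yv)\succ\zerov$; otherwise both sides degenerate and the inequality is vacuous). The maximum-entropy step and the two uses of Jensen's inequality are routine.
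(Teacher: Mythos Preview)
Your argument is correct on both sides: the maximum-entropy step for the upper bound and the pointwise Stam inequality followed by Jensen (via the concavity of $A\mapsto\log|A|$) for the lower bound are exactly the standard route to this estimate. Note, however, that the paper does not supply its own proof of this lemma at all; it is quoted verbatim as a known tool from the cited references \cite{ugur2020vector,ekrem2014outer,dembo1991information}, so there is nothing in the paper to compare your proof against beyond observing that your derivation matches the classical one those references contain.
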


\begin{lemma}\cite{ugur2020vector,ekrem2014outer,palomar2005gradient}\label{Fisher-MMSE-connection}
Let $(\Vm_1,\Vm_2)$ be a random vector with finite second moments and $\Zm$ be a Gaussian vector with zero mean and covariance matrix $\Sigmam_z$ which is independent of $(\Vm_1,\Vm_2)$. Then
\begin{align*}
\mmse(\Vm_2|\Vm_1,\Vm_2+\Zm)=\Sigmam_z - \Sigmam_z \Jm (\Vm_2+\Zm|\Vm_1)\Sigmam_z.
\end{align*}

\end{lemma}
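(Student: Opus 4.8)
The plan is to condition on $\Vm_1$ and reduce the claim to the classical complementary identity between MMSE and Fisher information for an additive Gaussian-noise channel. Write $\Ym\defeq\Vm_2+\Zm$. For each fixed value $\Vm_1=\vv_1$, the pair $(\Vm_2,\Ym)$ is exactly the input--output pair of a channel with Gaussian noise $\Zm\sim\Nc_{\CC}(\zerov,\Sigmam_z)$ independent of the input. Since the conditional MMSE matrix $\mmse(\Vm_2|\Vm_1,\Ym)$ is the $\Vm_1$-average of the per-$\vv_1$ MMSE matrices, and likewise the conditional Fisher information $\Jm(\Ym|\Vm_1)$ is the $\Vm_1$-average of the per-$\vv_1$ Fisher information matrices of $\Ym$, it suffices to establish the unconditional identity $\mmse(\Vm_2|\Ym)=\Sigmam_z-\Sigmam_z\Jm(\Ym)\Sigmam_z$ and then take expectation over $\Vm_1$. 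I would therefore work with the density $p_{\Ym}$ of $\Ym$ throughout, restoring the conditioning on $\Vm_1$ only at the end.

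First I would establish the vector analogue of the Tweedie/Esposito relation between the conditional mean and the score. Differentiating $p_{\Ym}(\yv)=\int p_{\Vm_2}(\vv)\,\Nc_{\CC}(\yv-\vv;\Sigmam_z)\,d\vv$ under the integral sign and using $\nabla_{\yv}\Nc_{\CC}(\yv-\vv;\Sigmam_z)=-\Sigmam_z^{-1}(\yv-\vv)\,\Nc_{\CC}(\yv-\vv;\Sigmam_z)$ gives $\nabla_{\yv}\log p_{\Ym}(\yv)=-\Sigmam_z^{-1}\big(\yv-\EE[\Vm_2\mid\Ym=\yv]\big)$, i.e. $\EE[\Vm_2\mid\Ym]=\Ym+\Sigmam_z\,\sv(\Ym)$ with the score $\sv(\yv)\defeq\nabla_{\yv}\log p_{\Ym}(\yv)$. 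Substituting $\Vm_2=\Ym-\Zm$ then yields the clean representation of the estimation error $\Vm_2-\EE[\Vm_2\mid\Ym]=-\big(\Zm+\Sigmam_z\,\sv(\Ym)\big)$, which turns the MMSE matrix into $\mmse(\Vm_2|\Ym)=\EE\big[(\Zm+\Sigmam_z\sv)(\Zm+\Sigmam_z\sv)^\H\big]$.

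Expanding this outer product produces four terms. The diagonal ones are immediate: $\EE[\Zm\Zm^\H]=\Sigmam_z$ and $\Sigmam_z\,\EE[\sv\sv^\H]\,\Sigmam_z=\Sigmam_z\Jm(\Ym)\Sigmam_z$ by the definition of the Fisher information matrix in Lemma \ref{Fisher-MMSE}. The two cross terms are where the real work lies: I would evaluate $\EE[\Zm\,\sv(\Ym)^\H]$ by Gaussian integration by parts (Stein's identity for the circularly symmetric Gaussian $\Zm$), which converts it into $\Sigmam_z$ times the expected Jacobian of the score, and then invoke the standard regularity identity $\EE[\nabla_{\yv}\sv(\Ym)]=-\EE[\sv\sv^\H]=-\Jm(\Ym)$ (the expected Hessian of the log-density equals minus the Fisher information). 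This gives $\EE[\Zm\sv^\H]=-\Sigmam_z\Jm(\Ym)$ and, by Hermitian symmetry, $\EE[\sv\Zm^\H]=-\Jm(\Ym)\Sigmam_z$. Collecting the four contributions, $\mmse(\Vm_2|\Ym)=\Sigmam_z-\Sigmam_z\Jm\Sigmam_z-\Sigmam_z\Jm\Sigmam_z+\Sigmam_z\Jm\Sigmam_z=\Sigmam_z-\Sigmam_z\Jm(\Ym)\Sigmam_z$, and averaging over $\Vm_1$ gives the lemma. The main obstacle is making the cross-term step rigorous: justifying differentiation under the integral sign, the vanishing of the boundary terms in the integration by parts, and the exchange of differentiation and expectation needed for the Hessian identity, all in the complex vector (Wirtinger-derivative, Hermitian-transpose) setting; these are precisely the regularity hypotheses implicit in the cited references.
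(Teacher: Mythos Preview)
Your proof sketch is sound, but note that the paper does not actually prove this lemma: it is stated with citations and then used as a black box in the converse of Theorem~\ref{Vector_Gaussian_proposition}. So there is no in-paper argument to compare against.

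That said, your route is precisely the classical one that appears in the cited references (in particular the Palomar--Verd\'u gradient paper): condition on $\Vm_1$, write the Tweedie/Esposito identity $\EE[\Vm_2\mid\Ym]=\Ym+\Sigmam_z\,\sv(\Ym)$ so that the estimation error is $-(\Zm+\Sigmam_z\,\sv(\Ym))$, and expand the outer product. One small simplification worth noting: since the Tweedie relation already gives $\sv(\Ym)=-\Sigmam_z^{-1}\EE[\Zm\mid\Ym]$, the cross term follows directly from the tower property,
\[
\EE\big[\Zm\,\sv(\Ym)^\H\big]=\EE\big[\EE[\Zm\mid\Ym]\,\sv(\Ym)^\H\big]=-\Sigmam_z\,\EE\big[\sv(\Ym)\sv(\Ym)^\H\big]=-\Sigmam_z\,\Jm(\Ym),
\]
which bypasses the Stein-plus-Hessian detour and the associated second-derivative regularity issues you flagged. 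With that replacement, the only analytic justification needed is differentiation under the integral in the Tweedie step, which is automatic here because the Gaussian kernel is smooth and the second-moment assumption on $\Vm_2$ controls the integrand.
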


By combining the complexity constraints \eqref{eq:region1-rate} and the relevance constraints \eqref{eq:region1-relevance}, we have
\begin{align}
\Delta_t - \sum_{l=1}^{t}R_l &\leq 
I(\Xm;\Ym_t)+I(\Xm;\Um^t|\Ym_t)-I(\Ym;\Um^t|\Ym_t)\nonumber\\
& \stackrel{(a)} = I(\Xm;\Ym_t)-I(\Ym;\Um^t|\Ym_t,\Xm)\nonumber\\
& = I(\Xm;\Ym_t) - h(\Ym|\Xm,\Ym_t) +h(\Ym|\Xm,\Ym_t,\Um^t)\nonumber\\
& \stackrel{(b)} \leq \log|\Id + \Sigmam_x \Sigmam_t^{-1} | - \log|(\pi e) \Sigmam_{0|t}|\nonumber\\
& \;\;\;\;+\log|(\pi e)(\Sigmam_{0|t}-\Sigmam_{0|t}\Omegam_t\Sigmam_{0|t})|\nonumber\\
& =  \log|\Id + \Sigmam_x \Sigmam_t^{-1} | + \log |\Id- \Omegam_t \Sigmam_{0|t} | ,
\end{align}
where (a) follows by the Markov chain $X \mkv (Y, Y_t) \mkv U^t$; 
(b) follows by applying the upper bound of Lemma \ref{Fisher-MMSE} in the third term and noticing the following relation, 
 \[\zerov \preceq \mmse(\Ym|\Xm,\Ym_t,\Um^t) \preceq \mmse(\Ym|\Xm,\Ym_t) = \Sigmam_{0|t},\]
implying that there exists $0 \preceq \Omegam_t \preceq \Sigmam_{0|t}^{-1}$ satisfying $\mmse(\Ym|\Xm,\Ym_t,\Um^t) = \Sigmam_{0|t}-\Sigmam_{0|t}\Omegam_t\Sigmam_{0|t}$. This establishes  \eqref{eq:vec_region_b}.

Now we look at  the relevance constraints \eqref{eq:region1-relevance}. By using Lemma \ref{Fisher-MMSE} we have
\begin{align}
\label{eq:1010}
\Delta_t & \leq I(\Xm;\Um^t, \Ym_t)=h(\Xm)-h(\Xm|\Um^t, \Ym_t)\nonumber\\
&\stackrel{(a)} \leq \log |(\pi e) \Sigmam_x|-\log |(\pi e) \Jm^{-1}(\Xm|\Um^t, \Ym_t)|,
\end{align}
where (a) follows by applying the lower bound in Lemma \ref{Fisher-MMSE}.
%
%
In order to apply Lemma \ref{Fisher-MMSE-connection}, we define first the MMSE estimation $\hat{\Xm}_t = \EE[\Xm|\Ym_{0,t}]$ of $\Xm$ given a $2m$-dimensional vector observation denoted by $\Ym_{0,t}=[ \Ym^\T, \Ym_t^\T]^\T$ such that
\begin{align*}
\Xm =  \hat{\Xm}_t +\tilde{\Xm}_t 
\end{align*}
where $\tilde{\Xm}_t  \sim \Nc_{\Cc}(\zerov, \Sigmam_{\tilde{x}_t})$ denotes the estimation error vector, independent of $\Ym_{0,t}$, with covariance given by 
\begin{align}\label{eq:ErrorCov}
\Sigmam_{\tilde{x}_t}^{-1} = \Sigmam_x^{-1} + \Id_{m \times 2m} \Sigmam_{w_0, w_t}^{-1}\Id_{m \times 2m}^\H 
\end{align}
where $\Id_{m \times 2m} =[\Id_m ~\Id_m]$ and $\Sigmam_{w_0, w_t}$ given by \eqref{eq:covariance matrix}. We define
\begin{align}
\Sigmam_{w_0, w_t}^{-1} = \left[
\begin{matrix} 
\Pm_1  & \Pm_2 \\
\Pm_3 & \Pm_4
\end{matrix}\right],
\end{align}
where $\Pm_1 = \Sigmam_{0|t}^{-1}$, $\Pm_2  = - \Sigmam_{0|t} \Sigmam_{0t} \Sigmam_{t}^{-1}$, $\Pm_3  = - \Sigmam_{t}^{-1} \Sigmam_{t0} \Sigmam_{0|t}^{-1}$,
$\Pm_4  = (\Sigmam_{t} - \Sigmam_{t0}\Sigmam_{0}^{-1}\Sigmam_{0t} )^{-1} = \Sigmam_{t}^{-1}+\Sigmam_{t}^{-1}\Sigmam_{t0}\Sigmam_{0|t}^{-1} \Sigmam_{0t}\Sigmam_{t}^{-1}$.
Using \eqref{eq:ErrorCov}, the estimator can be written as
\begin{align}
\hat{\Xm}_t &= \Sigmam_{\tilde{x}_t}  \Id_{m \times 2m}\Sigmam_{w_0, w_t}^{-1} \Ym_{0,t}\nonumber\\
&=\Sigmam_{\tilde{x}_t}(\Pm_1+\Pm_3)\Ym+\Sigmam_{\tilde{x}_t}(\Pm_2+\Pm_4)\Ym_t,
\label{eq:estimator_converse}
\end{align}
Now, we apply Lemma \ref{Fisher-MMSE-connection} by letting $\Zm=\tilde{\Xm}_t$, $\Vm_1=(\Um^t, \Ym_t)$ and $\Vm_2=\hat{\Xm}_t$ and obtain
\begin{align}
&\Jm(\Xm|\Um^t, \Ym_t)=\Sigmam_{\tilde{x}_t}^{-1}\!-\!\Sigmam_{\tilde{x}_t}^{-1}\mmse(\hat{\Xm}_t|\Um^t, \Ym_t,\Xm) \Sigmam_{\tilde{x}_t}^{-1} \nonumber\\
&\stackrel{(a)} =\Sigmam_{\tilde{x}_t}^{-1}\!-\!\Sigmam_{\tilde{x}_t}^{-1}\mmse(\Sigmam_{\tilde{x}_t}(\Pm_1+\Pm_3)\Ym|\Um^t, \Ym_t,\Xm) \Sigmam_{\tilde{x}_t}^{-1} \nonumber\\
&\stackrel{(b)} = \Sigmam_{\tilde{x}_t}^{-1} - (\Pm_1 \!+\! \Pm_3) \mmse(\Ym |\Um^t, \Ym_t,\Xm)  (\Pm_1 \!+\! \Pm_3)^\H \nonumber\\
& \stackrel{(c)} =\Sigmam_{\tilde{x}_t}^{-1} -  (\Pm_1 \!+\! \Pm_3) (\Sigmam_{0|t} - \Sigmam_{0|t} \Omegam_t \Sigmam_{0|t}) (\Pm_1 \!+\! \Pm_3)^\H,
\label{eq:fisher-converse}
\end{align}
where (a) follows by using \eqref{eq:estimator_converse} and noticing that $\Ym_t=\mathbb{E}(\Ym_t|\Um^t, \Ym_t,\Xm)$; (b) follows by a simple relation $\mmse(\Am \Ym| \cdot) = \Am \mmse(\Ym| \cdot) \Am^\H$ for some $\Am$; (c) follows because $\mmse(\Ym|\Um^t, \Ym_t, \Xm) \preceq \mmse(\Ym|\Ym_t, \Xm) =\Sigmam_{0|t}$ and we can find $\Omegam_t$ satisfying $\bm{0} \preceq \Omegam_{t} \preceq \Sigmam_{0|t}^{-1}$, such that $\mmse(\Ym|\Um^t, \Ym_t, \Xm)=\Sigmam_{0|t} - \Sigmam_{0|t} \Omegam_t \Sigmam_{0|t}$.

Finally, by plugging \eqref{eq:fisher-converse} into \eqref{eq:1010} and with some algebra, we obtain the desired expression \eqref{eq:vec_region_a}, hence completes the converse part.

\section{Proof of Lemma \ref{lemma-variational}}\label{proof of lemma2}
First, we have
\begin{align}
&\EE[-\log Q_{X|U_t}] = D_{KL}(P_{X|U_t}||Q_{X|U_t}) + H(X|U_t).\label{lemma2_01}
\end{align}
Similarily, we have
\begin{align}
I(U_T;Y) &= H(U_T)-H(U_T|Y)\nonumber\\
&= D_{KL}(P_{U_T|Y}||Q_{U_T})-D_{KL}(P_{U_T}||Q_{U_T}). \label{lemma2_02}
\end{align}

Applying \eqref{lemma2_01} to \eqref{eq:vb}, we have
\begin{align}
 \Lc^{\rm VB}_{\tbetav}(\Pm, \Qm) &= D_{KL}(P_{U_T|Y} || Q_{U_T})\nonumber\\
 &+ \sum_{t=1}^T \beta_t [D_{KL}(P_{X|U_t}||Q_{X|U_t}) \!+\! H(X|U_t)]\label{lemma2_03}\\
 & \stackrel{(a)} \geq \Lc_{\tbetav}(\Pm) + \sum_{t=1}^T \beta_t D_{KL}(P_{X|U_t}||Q_{X|U_t})\\
 &  \geq \Lc_{\tbetav}(\Pm),
\end{align}
where (a) follows from $D_{KL}(P_{U_T|Y} || Q_{U_T}) \geq I(U_T;Y)$ and using equation \eqref{eq:objective}. The equality is met by applying \eqref{lemma2_equality} into \eqref{lemma2_02} and \eqref{lemma2_03}.

\bibliographystyle{IEEEbib}
\bibliography{References}

\end{document}